\def\BibTeX{{\rm B\kern-.05em{\sc i\kern-.025em b}\kern-.08em
    T\kern-.1667em\lower.7ex\hbox{E}\kern-.125emX}}
\newacronym{SLFN}{SLFN}{Single hidden Layer Feedforward Networks}
\newacronym{ELM}{ELM}{Extreme Learning Machine}
\newacronym{MLP}{MLP}{Multi-Layer Perceptron}
\newacronym{ML}{ML}{Machine Learning}
\newacronym{GO}{GO}{Goal-Oriented}
\newacronym{OTA}{OTA}{Over-The-Air}
\newacronym{MSE}{MSE}{Mean Squared Error}
\newacronym{LS}{LS}{Least Squares}
\newacronym{SVD}{SVD}{Singular Value Decomposition}
\newacronym{iid}{i.i.d.}{independent and identically distributed}
\newacronym{TX}{Tx}{Transmitter}
\newacronym{RX}{Rx}{Receiver}
\newacronym{AWGN}{AWGN}{Additive White Gaussian Noise}
\newacronym{LOS}{LoS}{Line-of-Sight}
\newacronym{NLOS}{NLoS}{Non-Line-of-Sight}
\newacronym{RF}{RF}{Radio Frequency}
\newacronym{MIMO}{MIMO}{Multiple Input Multiple Output}
\newacronym{ADC}{ADC}{Analog-to-Digital Converter}
\newacronym{SNR}{SNR}{Signal-to-Noise Ratio}
\newacronym{ULA}{ULA}{Uniform Linear Arrays}
\newacronym{SIM}{SIM}{Stacked Intelligent Metasurface}
\newacronym{RIS}{RIS}{Reconfigurable Intelligent Surface}
\newacronym{EM}{EM}{Electromagnetic}
\newacronym{XL}{XL}{eXtremely Large}
\newacronym{NPA}{NPA}{Nonlinear Power Amplifier}
\newacronym{D2D}{D2D}{Device-to-Device}
\newacronym{EI}{EI}{Edge Inference}
\newacronym{SL}{SL}{Supervised Learning}
\newacronym{SGD}{SGD}{Stochastic Gradient Descent}
\newacronym{MS}{MS}{Metasurface}
\newacronym{DNN}{DNN}{Deep Neural Network}
\newacronym{AR}{AR}{Autoregresive}
\newacronym{JSCC}{JSCC}{Joint Source-Channel Coding}
\newacronym{CMOS}{CMOS}{Complementary Metal-Oxide-Semiconductor}
\newacronym{SVM}{SVM}{Support Vector Machine}
\newacronym{TV}{TV}{Time Varying}
\newacronym{CSI}{CSI}{Channel State Information}
\newtheorem{condition}{Condition}
\newtheorem{proposition}{\bf Proposition}
\newtheorem{remark}{Remark}
\newcommand\iidsample{\stackrel{\mathclap{\tiny\mbox{i.i.d.}}}{\sim}}
\title{Universal Approximation with XL MIMO Systems: OTA Classification via Trainable Analog Combining\\ \vspace{0.2cm}\normalfont\large \textit{(Extended version of a paper submitted to an IEEE Letters)}}
\author{Kyriakos Stylianopoulos,~\IEEEmembership{Student~Member,~IEEE}, and George C. Alexandropoulos,~\IEEEmembership{Senior~Member,~IEEE}
\thanks{This work has been supported by the SNS JU project 6G-DISAC under the EU's Horizon Europe research and innovation program under Grant Agreement No 101139130. The authors are with the Department of Informatics and Telecommunications,
National and Kapodistrian University of Athens, 16122 Athens, Greece (e-mails: \{kstylianop, alexandg\}@di.uoa.gr).}
}
\begin{document}

\maketitle

\begin{abstract}
In this paper, we show that an eXtremely Large (XL) Multiple-Input Multiple-Output (MIMO) wireless system with appropriate analog combining components exhibits the properties of a universal function approximator, similar to a feedforward neural network. By treating the channel coefficients as the random nodes of a hidden layer and the receiver's analog combiner as a trainable output layer, we cast the XL MIMO system to the Extreme Learning Machine (ELM) framework, leading to a novel formulation for Over-The-Air (OTA) edge inference without requiring traditional digital processing nor pre-processing at the transmitter. Through theoretical analysis and numerical evaluation, we showcase that XL-MIMO-ELM enables near-instantaneous training and efficient classification, even in varying fading conditions, suggesting the paradigm shift of beyond massive MIMO systems as OTA artificial neural networks alongside their profound communications role. Compared to conventional ELMs and deep learning approaches, whose training takes seconds to minutes, the proposed framework achieves on par performance (above $90\%$ classification accuracy across multiple data sets) with optimization latency of few milliseconds under the same number of trainable parameters, considering rich fading, low noise channels with XL receive antennas, making it highly attractive for inference tasks with ultra-low-power devices.

\end{abstract}
\begin{IEEEkeywords}
Extreme learning machines, XL MIMO, over-the-air computing, analog combining, universal approximation.
\end{IEEEkeywords}

\section{Introduction}~\label{sec:introduction}
Future device-to-device and \gls{GO} networks will facilitate the communication of sensory data from \gls{TX} to \gls{RX} devices, not solely for the purpose of conventional information decoding and storage, but also for extracting features that guide autonomous devices towards desired actions~\cite{GO_review2, 6G-DISAC-magazine}. While either of the \gls{TX} and \gls{RX} may potentially solve a feature extraction task independently, this practice leads to large data rate needs when the \gls{TX} transmits the complete data for the \gls{RX} to perform the computational task, or large computational requirements at the (low power) \gls{TX}, if the computation is first carried out locally.

Alternatively, under \gls{EI}~\cite{DMB23}, the \gls{TX}-\gls{RX} system is treated as an end-to-end artificial neural network model trained on the particular data and channel conditions to perform arbitrary inference tasks. Learnable representations --typically  outputs of intermediate layers of \glspl{DNN}-- are thus exchanged \gls{OTA}. \gls{EI} can thus utilize resources more efficiently for the particular goal or task, while simultaneously endowing the system with a high level of flexibility, as either endpoint may be made almost arbitrarily lightweight computationally. \gls{EI} works based on semantic communications~\cite{DeepSC} have showcased advantages in tasks like image retrieval~\cite{JGM21_Image_Retrieval} and semantic alignment~\cite{paolo_semantic_alignment}.

Recently, ideas from \gls{OTA} computing~\cite{OTA_review} have been incorporated in \gls{EI} systems, with the intention of offloading part of the computations directly in the wave domain~\cite{OTA25_Rahimian}. The work in~\cite{DeepOAC} treated the wireless channel as a hidden, yet uncontrollable, \gls{DNN} layer. By positioning reflective or diffractive \glspl{MS}~\cite{XYN18_Diffractive_DNN, Momeni2022_wave_DL_computing} onto the wireless environment, the channel response may be controlled similar to typical \gls{DNN} layers offering notable improvements in communication and computation resources, as demonstrated by~\cite{Stylianopoulos_GO}. Instead, the authors of~\cite{GJZ24_SIM_TOC} placed stacked \glspl{MS} at the transceivers towards substituting layers by wave-domain computational devices. Another methodology has recently proposed in~\cite{Gunduz_Layer_Approximation}, where trained \gls{DNN} layers were approximated by \gls{MS}-parametrized \gls{MIMO} channels.

Nevertheless, such forms of \gls{OTA} approaches rely, at least partially, on digital \gls{DNN} computations at the endpoints, which limits the benefits of \gls{OTA} offloading. Additionally, the extent under which \gls{OTA} layers perform equivalent computations to \glspl{DNN} has been largely unexplored from a theoretical perspective. Key insights that spurred the widespread adoption of digital \glspl{DNN}, such as universal approximation and generalization, yet remain absent.
Training overheads when accounting for fading conditions might prove practical barriers for \gls{EI} deployment. In fact, works trained on static fading (e.g., \cite{paolo_semantic_alignment, DeepOAC, Gunduz_Layer_Approximation}) require complete retraining at every channel coherence frame, while works accounting for dynamic fading conditions (i.e.,~\cite{JGM21_Image_Retrieval, Stylianopoulos_GO, GJZ24_SIM_TOC}) converge slowly.
\color{black}

Motivated by the above, we first show that fundamental \gls{DNN} operations can be performed exclusively \gls{OTA}, in the wave propagation domain, without the need for digital processing at the transceivers. We consider an \gls{XL} \gls{MIMO} system, where the \gls{RX} is implemented by a large \gls{MS}-based antenna array of receiving elements with nonlinear response and analog combining capabilities~\cite{9324910}, and show that it can be trained according to the \gls{ELM} framework~\cite{Huang_ELM_2006}, which guarantees universal function approximation. The channel fading coefficients are treated as random hidden weights, while the analog combiner acts as the trainable output layer. Finally, the closed-form training of the proposed \gls{XL}-\gls{MIMO}-\gls{ELM} may fit within the channel coherence time, with the system being capable of adapting to channel changes through low overhead optimization.

\section{System Model}\label{sec:system-model}
Consider an \gls{XL} \gls{MIMO} system with $N_t$ \gls{TX} antenna elements and $N_r$  \gls{RX} metamaterial-based antennas. Let $\mathbf{s} \triangleq [s_1, \dots s_{N_t}]^\top\in \mathbb{C}^{N_t \times 1}$ be the transmitted signal and $\mathbf{\tilde{n}} \sim \mathcal{CN}(\mathbf{0}, \sigma^2 \mathbf{I}) \in \mathbb{C}^{N_r \times 1}$ be the \gls{AWGN}. According to Ricean fading \cite{Larsson_MaMIMO_Ricean} with Ricean factor $\kappa$, pathloss $P_L$, deterministic \gls{LOS} component between the \gls{TX} and the \gls{RX} $\mathbf{H}_{\text{LoS}}$, and \gls{NLOS} component $\mathbf{H}_{\text{NLoS}}$ capturing the multipath effects and is assumed to exhibit Rayleigh fading. The channel matrix $\mathbf{H}\in \mathbb{C}^{N_r \times N_t}$ is defined as follows:
\begin{equation}\label{eq:ricean}
    \mathbf{H} \triangleq \sqrt{\frac{\kappa}{1 + \kappa}} \sqrt{P_L} \mathbf{H}_{\text{LoS}} + \sqrt{\frac{1}{1 + \kappa}} \sqrt{P_L} \mathbf{H}_{\text{NLoS}}.
\end{equation} 
The baseband representation of the received signal $\mathbf{y} \triangleq [y_1, \dots, y_{N_r}]^\top \in \mathbb{C}^{N_r \times 1}$ at the \gls{RX} metasurface-based antennas can be mathematically expressed as:
\begin{equation}\label{eq:received signal}
    \mathbf{y} \triangleq \mathbf{H} \mathbf{s} + \mathbf{\tilde{n}}.
\end{equation}

Unless otherwise specified, we consider \gls{MIMO} channel realizations to be \gls{iid} samples, while the positions of the transceivers remain fixed.
When \gls{TV} channels are considered, we utilize a first-order \gls{AR} model for time step $k$:
\begin{equation}\label{eq:AR-channel}
    \mathbf{H}(k) = \eta \mathbf{H}(k-1) + (1-\eta) \boldsymbol{\Theta}(k),
\end{equation}
where $0 \! < \! \eta \leq \! 1$ is the constant \gls{AR} coefficient, capturing the channel's temporal correlation, and $\boldsymbol{\Theta}(k) \! \iidsample \! \mathcal{CN}( \mathbf{0}, \mathbf{I})$.

\subsection{Analog Combining with Nonlinear Soft Thresholding}\label{sec:analog-combining}
The choice of the analog combiner architecture is of particular importance for the proposed XL-MIMO-ELM system, since nonlinear operations are needed to ensure universal approximation.
To that end, we adopt \cite{active_RIS}'s \gls{MS}-based architecture as the \gls{RX} panel, where each antenna element is characterized by a nonlinear yet {\em fixed} response, followed by a {\em controllable} linear analog combiner weight. Hence, each $y_n$ from \eqref{eq:received signal} is independently passed through a basic component implementing soft thresholding. For example, using Rapp's model~\cite{Rapp_model}, the component's transfer function is expressed as:
\begin{equation}\label{eq:rapp-activation}
    g_{\rm rapp}(y_n) \triangleq y_n({1 + (y_n/y_{\rm sat})^{\alpha}})^{-1},
\end{equation}
where $y_{\rm sat}$ is a saturation signal threshold, while $\alpha$ controls the effect of the nonlinearity.
Note that this model can capture the transfer function properties of various \gls{RF} components, such as low-noise, variable-gain, or automatic gain control amplifiers.
The exact details of \gls{MS} elements designed to offer this response are left purposely unspecified, as designing metamaterial and \gls{RF} components that best address the needs of \gls{XL}-\gls{MIMO}-\glspl{ELM} opens a new direction of research.
An illustration of different instances of $g_{\rm rapp}(y_n)$ for various values of $y_{\rm sat}$ and $\alpha$ is given in Fig.~\ref{fig:rapp-activation-examples}.
In the sequel, we choose $|y_{\rm sat}| = 1.5$ and $\alpha = 2$ to obtain an activation function in $[-1, 1]$ that mimics the shape of the sigmoid function (i.e., $g(x) = 1/(1+\exp(-x))$), and is thus convenient for binary classification problems, as considered in this paper.
The output of each $g_{\rm rapp}(y_n)$ is multiplied by a complex weight $w_n$ capable of joint phase-amplitude control. Then, all weighted received signals are guided to the analog combiner to obtain the output of the \gls{XL} \gls{MIMO} system as follows ($\mathbf{w} \triangleq [w_1, \dots, w_{N_r}]^\top$):
\begin{equation}\label{eq:mimo-system-output}
    z \triangleq \sum_{n=1}^{N_r} w_n g_{\rm rapp}(y_n) =\mathbf{w}^\top g_{\rm rapp}(\mathbf{y}).
\end{equation}

\begin{figure}[t]
    \centering
    \includegraphics[width=\linewidth]{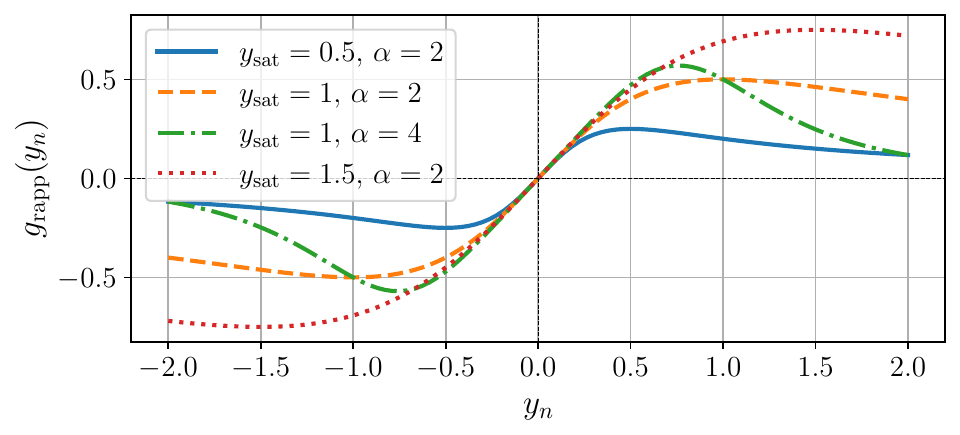}
    \caption{\small Soft thresholding response via Rapp's model~\cite{Rapp_model} which is used as the activation function for our XL-MIMO-ELM implemented directly with RF circuitry.}
    \label{fig:rapp-activation-examples}
\end{figure}



\begin{figure*}[t]
    \centering
    \includegraphics[scale=0.9,trim={14pt 10pt 0 0},clip]{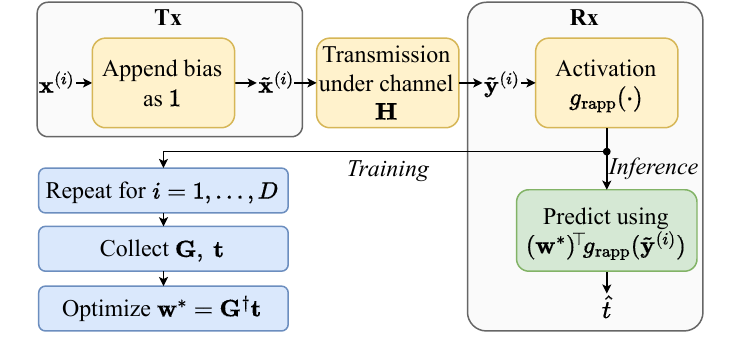}
    \caption{Flow of computation of the proposed \gls{XL}-\gls{MIMO}-\gls{ELM} methodology, including the training and inference procedures.}
    \label{fig:block_diagram}
\end{figure*}
\section{XL MIMO Learning Machines} \label{sec:model-description}
We consider a standard formulation of \gls{EI}, where a lightweight \gls{TX} observes correlated data instances through its sensors, and intends to transmit a computable feature of them to the \gls{RX}. A data set $\mathcal{D} \triangleq \{ (\mathbf{x}^{(i)}, t^{(i)})\}_{i=1}^{D}$ of $D$ observation vectors $\mathbf{x}^{(i)} \triangleq [x^{(i)}_1, \dots, x^{(i)}_d]^\top \in \mathbb{R}^{d \times 1}$, paired with their corresponding target values $t^{(i)} \in \mathbb{R}$, is available prior to the training process.
During the OTA supervised training phase, a mapping function between observations and targets $\hat{t} = f_{\mathbf{w}}(\mathbf{x})$ needs to be learned in one shot, that is parametrized by the analog combining vector $\mathbf{w}$.
The computations involved within $f_{\mathbf{w}}(\cdot)$ take place exclusively \gls{OTA} combinedly with analog processing at the \gls{RX}. 
In fact, the proposed model is expressed as a form of an \gls{SLFN}, where no processing takes place at the \gls{TX} other than what is required for analog transmission, as described in the sequel. Notwithstanding its parsimony, this choice has been made to investigate the theoretical characteristics of \gls{OTA} learning instead of proposing a more potent model. In particular, the \gls{TX} uses $N_t = d+1$ antennas to transmit the signal $\mathbf{\tilde{x}}^{(i)} \in \mathbb{R}^{d+1 \times 1}$, every element of which corresponds to the uncoded element of the observation vector $\mathbf{x}^{(i)}$, appended by $1$, i.e., $\mathbf{\tilde{x}}^{(i)} \triangleq [x^{(i)}_1, \dots, x^{(i)}_d,1]^\top$.
Despite work focusing on \glspl{ELM} on the complex domain~\cite{Complex_DNNs_Review}, we exclusively consider real-valued models, since they lend themselves more directly to theoretical investigations~\cite{Huang_ELM_2006, Huang_Activation} and are compatible with the utilized activation function. Therefore, only the real parts of the channel coefficients and \gls{AWGN}, which are denoted as $\mathbf{H}^{\rm r}$ and $\mathbf{\tilde{n}}_r$, are needed.
Leveraging~\eqref{eq:received signal} and considering the high \gls{SNR} regime where the \gls{AWGN} may be ignored, the \gls{OTA} transmission acts as \gls{SLFN}'s hidden linear layer, with the received signal be expressed as:
\begin{equation}\label{eq:hidden-layer-output}
    \mathbf{\tilde{y}}^{(i)} = \mathbf{H}^{\rm r}\mathbf{\tilde{x}}^{(i)} = \mathbf{H}^{\rm r}_{:,1:d}\mathbf{x}^{(i)} + \mathbf{h}^{\rm r}_{:,d+1},
\end{equation}
where $\mathbf{H}^{\rm r}_{:,1:d}$ and $\mathbf{h}^{\rm r}_{:,d+1}$ denote the first $d$ columns and the last ($d+1$) column of $\mathbf{H}^{\rm r}$, respectively, and the right hand side of~\eqref{eq:hidden-layer-output} is given to illustrate the implicit bias term of standard \glspl{SLFN} which arises due to the inclusion of the element $1$ on $\mathbf{\tilde{x}}^{(i)}$.
Next, each component of $\mathbf{\tilde{y}}^{(i)}$ is passed by the soft thresholding component described in Section~\ref{sec:analog-combining} which is used as a nonlinear activation function for the hidden layer.
Finally, the linear combiner $\mathbf{w}$ (now seen as an  $N_r$-dimensional real vector), acts as the trainable model weight:
\begin{align}\label{eq:mimo-elm-model}
    \hat{t}^{(i)} &= f_{\mathbf{w}}\left(\mathbf{x}^{(i)}\right) = \mathbf{w}^\top g_{\rm rapp}\left( \mathbf{\tilde{y}}^{(i)}\right) \\
     &= \mathbf{w}^\top g_{\rm rapp}\left( \mathbf{H}^{\rm r}\mathbf{\tilde{x}}^{(i)} \right) = \sum_{j=1}^{N_r} w_j g_{\rm rapp}\left( \mathbf{H}^{\rm r}_{j,:}\mathbf{\tilde{x}}^{(i)} \right).
\end{align}
The \gls{SLFN} of~\eqref{eq:mimo-elm-model}, where a hidden layer of \emph{uncontrollable, random} coefficients is followed by a nonlinear activation and a linear \emph{controllable} operation has the form of an \gls{ELM}~\cite{Huang_ELM_2006}, providing a framework for training and theoretical analysis.

\subsection{Training and Theoretical Guarantees} \label{sec:theoretical-analasyis}
Let $\mathbf{G} \in \mathbb{R}^{D \times N_r}$ denote the outputs of the hidden layer of $f_{\mathbf{w}}(\cdot)$ once passed through the activation function for all data points $\mathbf{x}^{(i)} \in \mathcal{D}$, i.e., $\mathbf{G} \triangleq [g_{\rm rapp}(\mathbf{H}^{\rm r}\mathbf{\tilde{x}}^{(1)}), \dots, g_{\rm rapp}(\mathbf{H}^{\rm r}\mathbf{\tilde{x}}^{(D)})]^\top$.
Similarly, let $\mathbf{t} \triangleq [t^{(1)}, \dots, t^{(D)}]^\top \in \mathbb{R}^{D \times 1}$ include the target values, so that the squared error between the model's outputs and target values over $\mathcal{D}$, seen as a cost function over $\mathbf{w}$, is expressed as:
\begin{equation}\label{eq:training}
    \mathbf{w}^\ast \triangleq {\rm arg}\!\min_{\mathbf{w}} \| \mathbf{G} \mathbf{w}   -  \mathbf{t}\|_{2}^{2} = \mathbf{G}^{\dagger}\mathbf{t}.
\end{equation}
The right hand side (r.h.s.) of~\eqref{eq:training} is the \gls{LS} solution~\cite[eq. (13) and Thm. 5.1]{Huang_ELM_2006}, where $\mathbf{G}^{\dagger}$ denotes the Moore-Penrose generalized inverse of $\mathbf{G}$. The overall methodology is illustrated in Fig.~\ref{fig:block_diagram}.

\begin{remark}\label{rem:weight-norm}
    The $\mathbf{w}^\ast$ obtained by~\eqref{eq:training} is unique while having the smallest norm among all possible solutions~\cite[Prop. 8.4.2]{Serre_Book}, and therefore inherently minimizes the system's reception power, which is given by $P_r = | \mathbf{w}|^2$.
\end{remark}


\begin{proposition}\label{prop:universal-approximation}
    Consider the \gls{ELM} expressed via~\eqref{eq:mimo-elm-model} with $\mathbf{H}^{\rm r}$ following the Ricean fading channel model with sufficiently small $\kappa$ (i.e., rich scattering) and the Rapp activation function of~\eqref{eq:rapp-activation} with $\alpha \in \mathbb{N}_+^* \backslash \{1\} $. Then, given any arbitrarily small value $\epsilon > 0$, there exists $N_r \leq D$ such that, for $D$ arbitrary distinct samples of $\mathcal{D} = \{ (\mathbf{x}^{(i)}, t^{(i)})\}_{i=1}^{D}$, there exists $\mathbf{w}^\ast$ so that $\| \mathbf{G} \mathbf{w}^\ast  \! - \! \mathbf{t} \|< \epsilon $ with probability $1$.
\end{proposition}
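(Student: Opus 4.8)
\emph{Proof plan.} The statement is the Extreme Learning Machine interpolation/universal-approximation result \cite{Huang_ELM_2006} specialized to the case where the hidden weights are the rows of the Ricean channel matrix $\mathbf{H}^{\rm r}$ and the activation is $g_{\rm rapp}$. The plan is to take $N_r = D$ (which satisfies $N_r \le D$) and to prove that the hidden-layer output matrix $\mathbf{G} \in \mathbb{R}^{D\times D}$ entering \eqref{eq:training} is nonsingular with probability $1$; since for a square nonsingular $\mathbf{G}$ one has $\mathbf{G}^{\dagger} = \mathbf{G}^{-1}$, the choice $\mathbf{w}^\ast = \mathbf{G}^{-1}\mathbf{t}$ from \eqref{eq:LS-solution} then attains $\|\mathbf{G}\mathbf{w}^\ast - \mathbf{t}\| = 0 < \epsilon$ almost surely. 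If one wishes to exploit the freedom $N_r < D$, the incremental random-node construction of \cite{Huang_ELM_2006} drives the residual below any prescribed $\epsilon$ already for some $N_r < D$; the nonsingularity argument below is the key ingredient in either route.

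Two preliminary facts are needed. First, on the activation: for $\alpha \in \mathbb{N}_+^*\setminus\{1\}$, $g_{\rm rapp}(u) = u\,(1 + (u/y_{\rm sat})^{\alpha})^{-1}$ from \eqref{eq:rapp-activation} is a rational, non-polynomial function that is real-analytic (hence $C^\infty$) on the open set $U = \{ u \in \mathbb{R} : 1 + (u/y_{\rm sat})^{\alpha} \ne 0 \}$; for even $\alpha$ one has $U = \mathbb{R}$, while for odd $\alpha$ only the single point $u = -y_{\rm sat}$ is removed, which does not affect the argument. Second, on the randomness: by \eqref{eq:ricean}, $\mathbf{H}^{\rm r}$ equals a deterministic matrix (the real part of the LoS term) plus $\sqrt{1/(1+\kappa)}\sqrt{P_L}$ times the real part of the Rayleigh component, so its rows $\mathbf{h}^{\rm r}_{1,:},\dots,\mathbf{h}^{\rm r}_{N_r,:}$ are i.i.d.\ with an absolutely continuous joint density of full support on a neighbourhood in $\mathbb{R}^{d+1}$; consequently any Lebesgue-null set of hidden-weight configurations is realized with probability $0$, which is the only probabilistic input required. (This is where a non-degenerate scattering coefficient enters; it is worth double-checking that the hypothesis stated on $\kappa$ indeed guarantees this absolute continuity.)

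The core step is the almost-sure nonsingularity of $\mathbf{G}$. View $\det \mathbf{G}$ as a function of the random rows $\mathbf{h}_1,\dots,\mathbf{h}_D \in \mathbb{R}^{d+1}$ (writing $\mathbf{h}_j \equiv \mathbf{h}^{\rm r}_{j,:}$): the entry $(i,j)$ is $g_{\rm rapp}(\mathbf{h}_j^\top \tilde{\mathbf{x}}^{(i)})$, each real-analytic in $\mathbf{h}_j$ on the connected open domain where all arguments lie in $U$, and $\det \mathbf{G}$ is a polynomial in these entries, hence real-analytic there. A real-analytic function on a connected open set either vanishes identically or has a zero set of Lebesgue measure zero, so it suffices to exhibit \emph{one} configuration $(\mathbf{h}_1,\dots,\mathbf{h}_D)$ with $\det \mathbf{G} \ne 0$. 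Since the $\tilde{\mathbf{x}}^{(i)}$ are distinct, pick $\mathbf{a} \in \mathbb{R}^{d+1}$ so that the scalars $c_i \triangleq \mathbf{a}^\top \tilde{\mathbf{x}}^{(i)}$ are pairwise distinct (a generic $\mathbf{a}$ works), and search among weights of the form $\mathbf{h}_j = b_j \mathbf{a}$, for which entry $(i,j)$ becomes $g_{\rm rapp}(b_j c_i)$. Because $g_{\rm rapp}$ is non-polynomial and smooth, the family $\{\, x \mapsto g_{\rm rapp}(b x) : b \in \mathbb{R} \,\}$ spans an infinite-dimensional function space (the classical mechanism behind single-hidden-layer universal approximation \cite{Huang_ELM_2006, Huang_Activation}), so the scalars $b_1,\dots,b_D$ can be chosen one at a time, each new $b_j$ avoiding a Lebesgue-null set, so that the columns $[g_{\rm rapp}(b_j c_i)]_{i=1}^{D}$ are linearly independent, i.e.\ $\det \mathbf{G} \ne 0$. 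This yields the desired configuration, hence $\mathbb{P}(\det \mathbf{G} = 0) = 0$, and the proposition follows as indicated.

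I expect the main obstacle to be the bookkeeping around the activation rather than the probabilistic skeleton, which is standard: one must verify that $g_{\rm rapp}$ is genuinely non-polynomial and $C^\infty$ on a connected domain containing (almost all of) the support of $\mathbf{H}^{\rm r}\tilde{\mathbf{x}}^{(i)}$ for \emph{every} admissible $\alpha \ge 2$, treating the pole at $u = -y_{\rm sat}$ for odd $\alpha$, and then make the ``an independent-columns configuration exists'' step fully rigorous for this particular $g_{\rm rapp}$ rather than for a generic sigmoidal activation.
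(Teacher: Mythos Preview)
Your proposal is correct but takes a different, more self-contained route than the paper. The paper's proof is a two-line verification: it checks that (i) the hidden weights satisfy the distributional hypothesis of \cite[Thm.~2.2]{Huang_ELM_2006} (Ricean fading with $\kappa\to 0$ reduces to Rayleigh, hence continuous with full support), and (ii) $g_{\rm rapp}$ meets the activation requirements of \cite{Huang_Activation} (nonlinear, $C^\infty$, bounded, with limits at $\pm\infty$), the boundedness being shown by computing the unique critical point $y_n^\ast = y_{\rm sat}(\alpha-1)^{-1/\alpha}$. It then simply invokes those theorems. You instead reconstruct the mechanism inside Huang's theorem: real-analyticity of $\det\mathbf{G}$ in the hidden weights, measure-zero zero set unless identically zero, and an explicit good configuration via the non-polynomial nature of $g_{\rm rapp}$.

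What each approach buys: the paper's is much shorter but treats the cited results as black boxes; yours is longer but stands on its own and, more importantly, correctly handles a point the paper glosses over---for odd $\alpha$ the denominator $1+(y_n/y_{\rm sat})^\alpha$ vanishes at $y_n=-y_{\rm sat}$, so $g_{\rm rapp}$ is \emph{not} $C^\infty$ on all of $\mathbb{R}$, contrary to the paper's blanket claim. Your ``analytic on a connected open set minus a null set'' formulation absorbs this, whereas the paper's Condition~2 verification is strictly valid only for even $\alpha$. Your parenthetical doubt about the $\kappa$ hypothesis is also warranted: the proposition states ``sufficiently large $\kappa$,'' but rich scattering corresponds to small $\kappa$, and indeed the paper's own proof takes $\kappa\to 0$; the statement appears to have the direction reversed.
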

\begin{proof}
    The proof follows the direct application of~\cite[Thm~2.2]{Huang_ELM_2006} with a change of notation and the activation function requirements of~\cite{Huang_Activation}, under the conditions:
    \begin{condition}\label{cond:random-weights}
    The entries of $\mathbf{H}^r$ are \gls{iid} from a continuous probability distribution with infinite support over $\mathbb{R}^{N_r \times d+1}$.
    \end{condition}
    \begin{condition}\label{cond:activation}
    The function $g_{\rm rapp}(\cdot)$ is an infinitely differentiable nonlinear function. Further conditions on the activation functions are imposed by \cite{Huang_Activation}, stating $g_{\rm rapp}(\cdot)$ to be bounded and its limit as $x\to \infty$ or $x \to -\infty$ to exist.
    \end{condition}
    For $\kappa \to 0$, \eqref{eq:ricean}'s Ricean fading on $\mathbf{H}$ reduces to the uncorrelated Rayleigh distribution. Since it is continuous, $\mathbf{H}$, and hence, $\mathbf{H}^{\rm r}$ may be sampled in any interval in $\mathbb{R}^{N_r \times d+1}$ with nonzero probability, which fulfills Condition~\ref{cond:random-weights}. Also, \eqref{eq:rapp-activation}'s Rapp activation function is nonlinear and infinitely differentiable with respect to $y_n$, as long as $\alpha \in \mathbb{N}^+$. Furthermore, $\lim_{y_n \to \infty} g_{\rm rapp}(y_n) = 0$ for $\alpha > 1$, due to the dominance of the denominator.  Regarding boundedness for $y_n  \in (0, +\infty)$, $d g_{\text{rapp}}(y_n)/{dy_n} = y_{\text{sat}}^\alpha \left(y_{\text{sat}}^\alpha + (1 - \alpha) y_n^\alpha \right)/{(y_{\text{sat}}^\alpha + y_n^\alpha)^2} = 0$ provides a unique solution at $y_n^\ast = y_{\text{sat}} (\alpha - 1)^{-1/\alpha}$, and $g_{\rm rapp}(y_n^\ast) = \frac{y_{\text{sat}}}{\alpha} (\alpha - 1)^{1 - \frac{1}{\alpha}}$ is a finite maximum value. Following the same arguments for $y_n  \in (-\infty, 0)$ and by noting that $g_{\rm rapp}(0) = 0$, $g_{\rm rapp}(\cdot)$ is bounded everywhere. As a result, Condition~\ref{cond:activation} holds for the Rapp activation function, which completes the proof. 
\end{proof}
\begin{remark}
    The time complexity for obtaining the \gls{LS} solution of~\eqref{eq:training} is $\Theta(D N_r \min\{D, N_r\})$, resulting from the \gls{SVD} process in obtaining $\mathbf{G}^{\dagger}$. For reasonably small data sets systems where $D = \Theta(N_r)$, training may effectively take place within typical low mobility sub-$6$ GHz channel coherence frames of few ms~\cite{kivinen2001_emperical}.
\end{remark}

\begin{figure*}[t]
    \centering
    \begin{subfigure}[b]{0.45\textwidth}
        \centering
        \includegraphics[width=\linewidth]{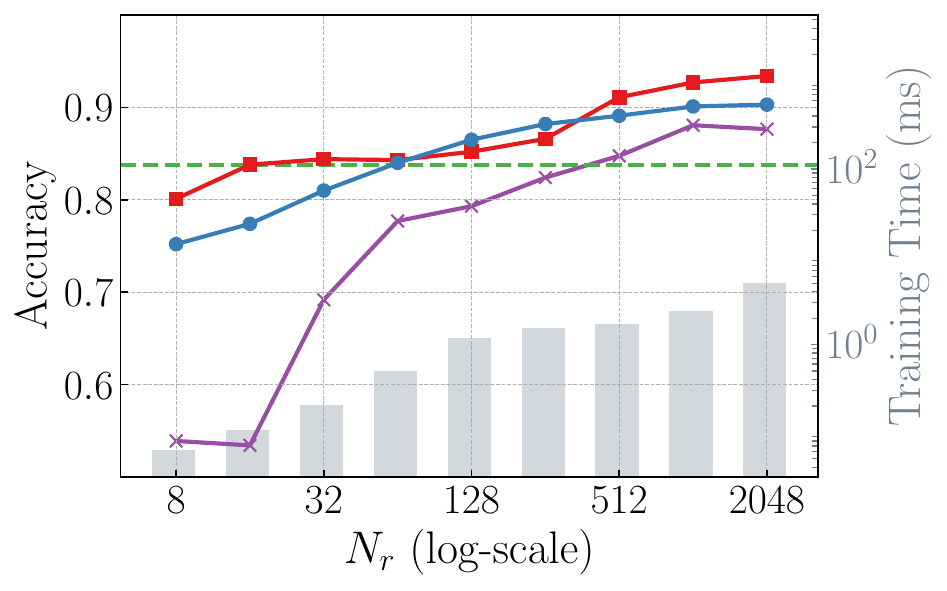}
        \caption{Parkinson's}
        \label{fig:parkinsons}
    \end{subfigure}~~~~%
    \begin{subfigure}[b]{0.45\textwidth}
        \centering
        \includegraphics[width=\linewidth]{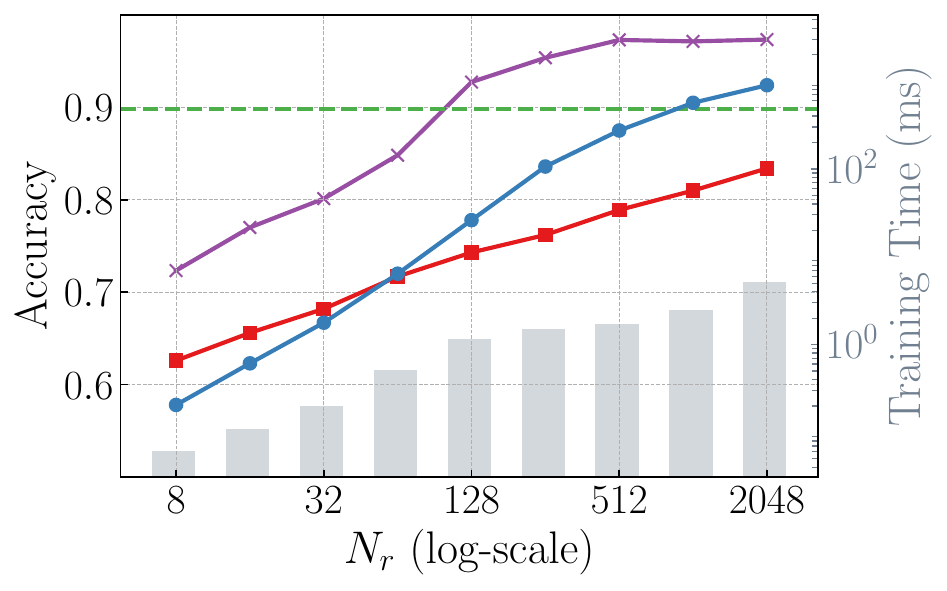}
        \caption{WBCD}
        \label{fig:breast_cancer}
    \end{subfigure}

    \begin{subfigure}[b]{0.45\textwidth}
        \centering
        \includegraphics[width=\linewidth]{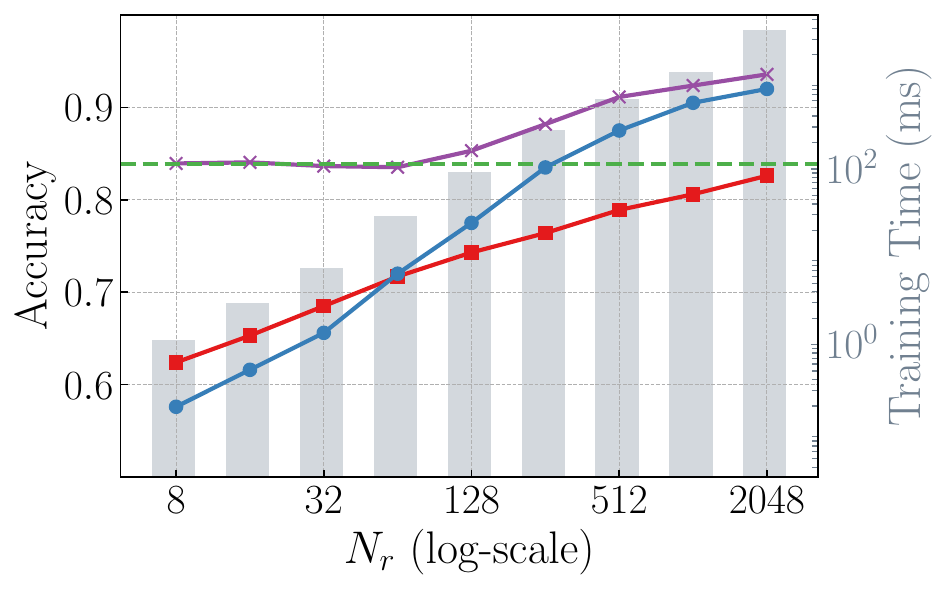}
        \caption{MNIST}
        \label{fig:mnist50}
    \end{subfigure}~~~~%
    \begin{subfigure}[b]{0.45\textwidth}
        \centering
        \includegraphics[width=\linewidth]{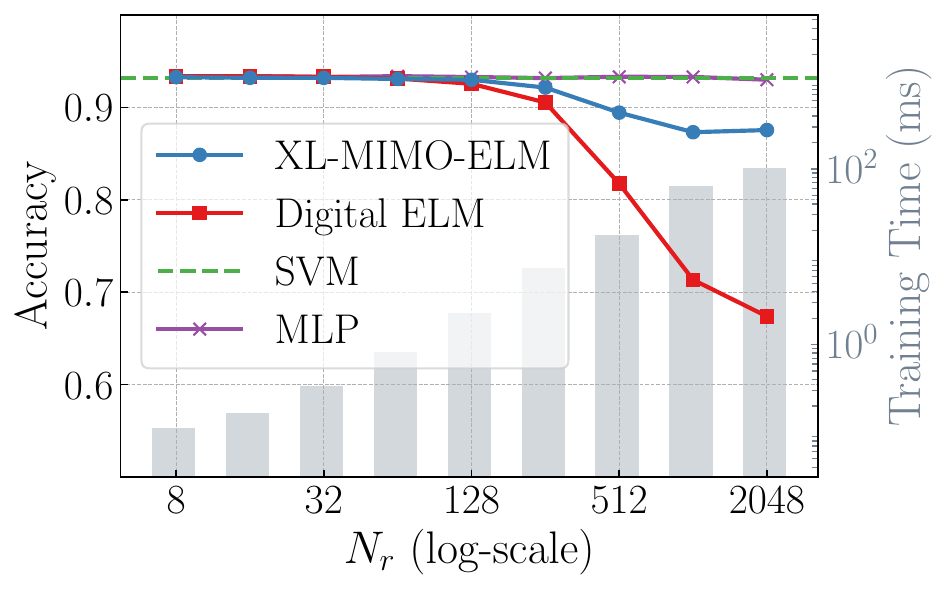}
        \caption{SECOM}
        \label{fig:secom}
    \end{subfigure}

    \caption{\small Comparative performance of the proposed \gls{XL}-\gls{MIMO}-\gls{ELM} with benchmarks across different datasets for increasing number of \gls{RX} antennas $N_r$ (equaling the number of units in the hidden layer). Bar graphs corresponding to the right vertical axis (log-scale) indicate the mean execution time for optimizing the \gls{XL}-\gls{MIMO}-\gls{ELM}.}
    \label{fig:all_datasets_wide}
\end{figure*}
The application of~\eqref{eq:mimo-elm-model} implies that both training and inference on unseen data must take place insofar $\mathbf{H}$ remains static, which is constraining in real-life communication systems. Assuming the \gls{AR} \gls{TV} model for the channel following~\eqref{eq:AR-channel} from time instance $k-1$ to $k$, we further propose a lightweight re-training policy to obtain approximately optimal weight vectors $\mathbf{w}^*(k)$ for every new channel realization $\mathbf{H}(k)$ sampled through~\eqref{eq:AR-channel}.
A set of random data point indices $\mathcal{S}$ is sampled to obtain a small mini-batch.
Its corresponding hidden-layer output matrix and target vector are computed as $\mathbf{G}_{\mathcal{S}} \triangleq [g_{\rm rapp}(\mathbf{H}^{\rm r}(k)\mathbf{\tilde{x}}^{(i)})]_{i \in \mathcal{S}}^\top$ and $\mathbf{t}_{\mathcal{S}} \triangleq [t^{(i)}]_{i \in \mathcal{S}}^\top$.
Similar to \eqref{eq:training}, $\mathbf{w}_{\mathcal{S}}^* \triangleq \mathbf{G}_{\mathcal{S}}^\dagger \mathbf{t}_{\mathcal{S}}$ is obtained for the mini-batch \gls{LS} problem with complexity $\Theta(|\mathcal{S}|^2N_r)$ and is then used to update $\mathbf{w}(k)$:
\begin{equation}\label{eq:w-online-update}
    \mathbf{w}^*(k) \gets \mathbf{w}^*(k) + \gamma \mathbf{w}_{\mathcal{S}}^*,
\end{equation}
where $\gamma < 1$ is the learning rate.
The sampling, \gls{LS}-solution, and update steps are repeated until convergence, while $\mathbf{w}^*(k)$ is initialized as the optimal solution for the $k-1$-th time step.
In practice, very few iterations are needed since $\mathbf{H}(k)$ and $\mathbf{H}(k-1)$ have high cross-correlation for typical wireless environments, leading to good starting points for $\mathbf{w}^*(k)$.

\begin{figure*}[!t]
    \centering
    \begin{subfigure}[b]{0.4\textwidth}
        \centering
        \includegraphics[width=\linewidth]{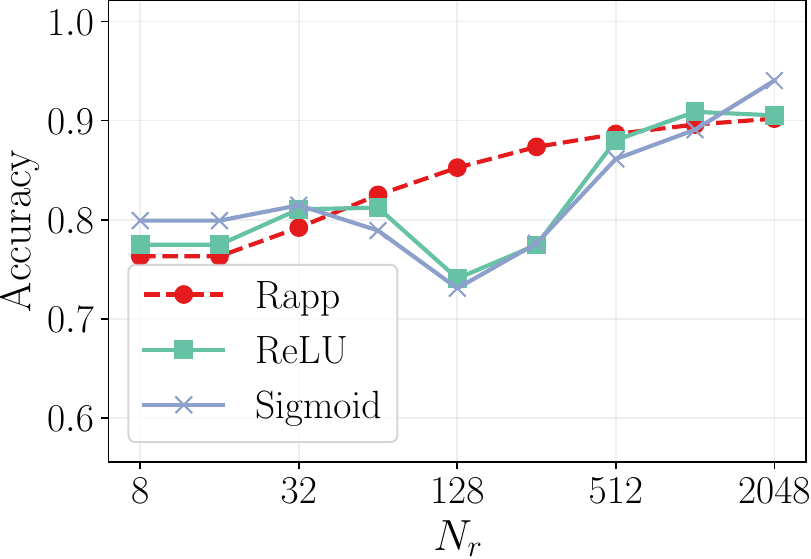}
        \caption{Parkinson's}
    \end{subfigure}~~%
    \begin{subfigure}[b]{0.4\textwidth}
        \centering
        \includegraphics[width=\linewidth]{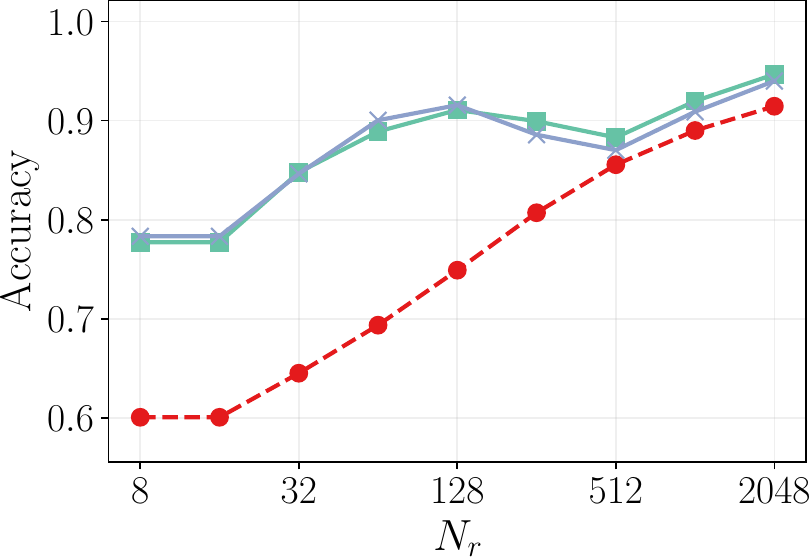}
        \caption{WBCD}
    \end{subfigure}

    \begin{subfigure}[b]{0.4\textwidth}
        \centering
        \includegraphics[width=\linewidth]{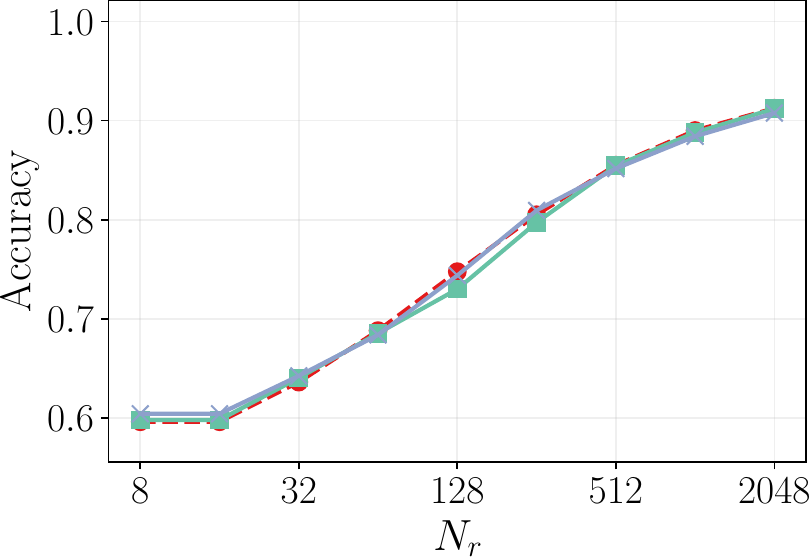}
        \caption{MNIST}
    \end{subfigure}~~%
    \begin{subfigure}[b]{0.4\textwidth}
        \centering
        \includegraphics[width=\linewidth]{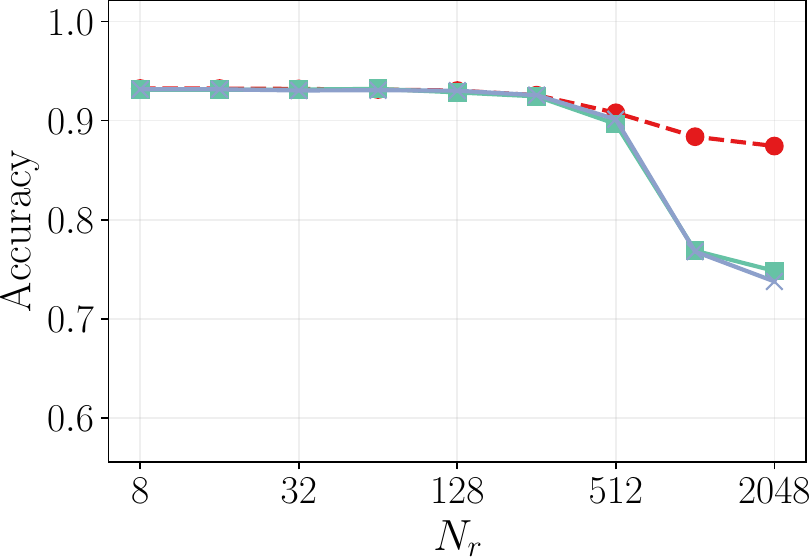}
        \caption{SECOM}
    \end{subfigure}

    \caption{Comparison of different activations under the \gls{XL}-\gls{MIMO}-\gls{ELM} methodology. The proposed ``Rapp'' activation function is implemented using analog \gls{RF} circuitry, while Sigmoid and ReLU are standard digital activations.}
    \label{fig::ablation-study}
\end{figure*}

\begin{figure*}[t]
    \centering
    \begin{subfigure}[b]{0.4\textwidth}
        \centering
        \includegraphics[width=\linewidth]{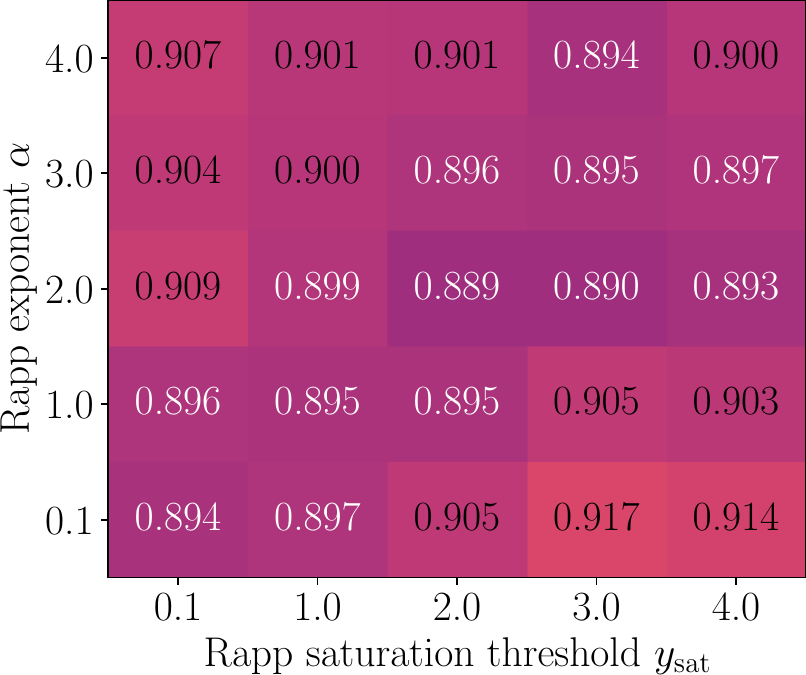}
        \vspace{0.002cm}
        \caption{Parkinson's}
    \end{subfigure}~~~~%
    \begin{subfigure}[b]{0.4\textwidth}
        \centering
        \includegraphics[width=\linewidth]{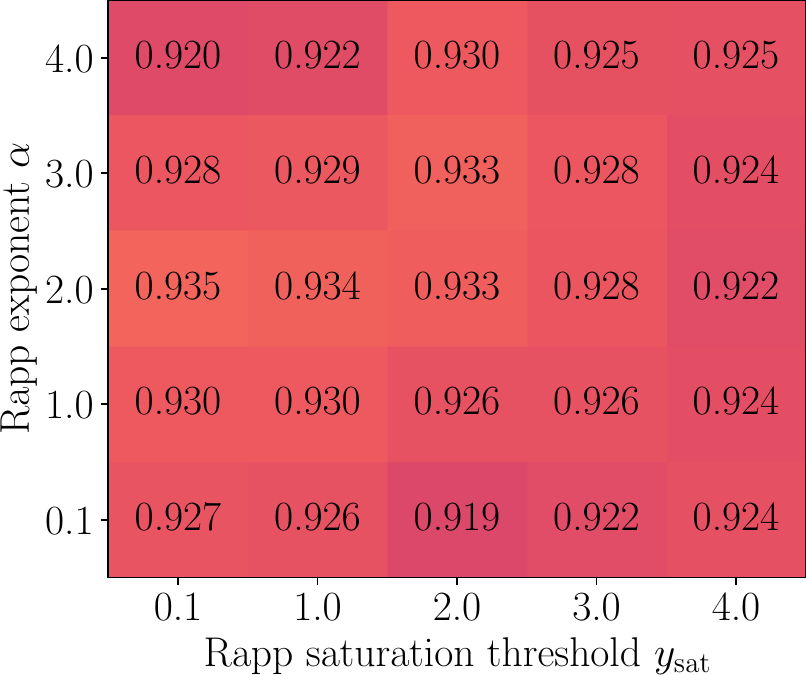}
        \vspace{0.002cm}
        \caption{WBCD}
    \end{subfigure}

    \vspace{0.8cm}
    
    \begin{subfigure}[b]{0.4\textwidth}
        \centering
        \includegraphics[width=\linewidth]{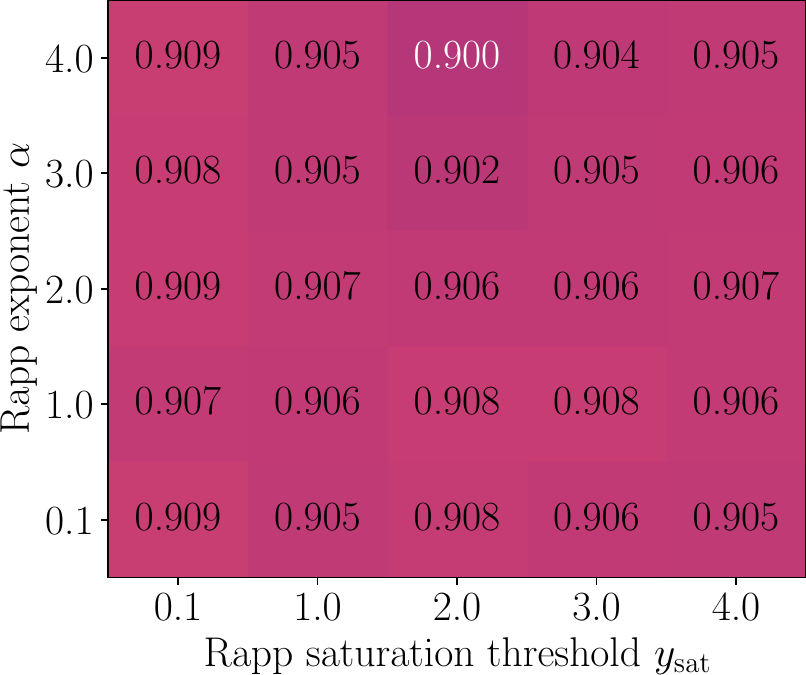}
        \vspace{0.002cm}
        \caption{MNIST}
    \end{subfigure}~~~~%
    \begin{subfigure}[b]{0.4\textwidth}
        \centering
        \includegraphics[width=\linewidth]{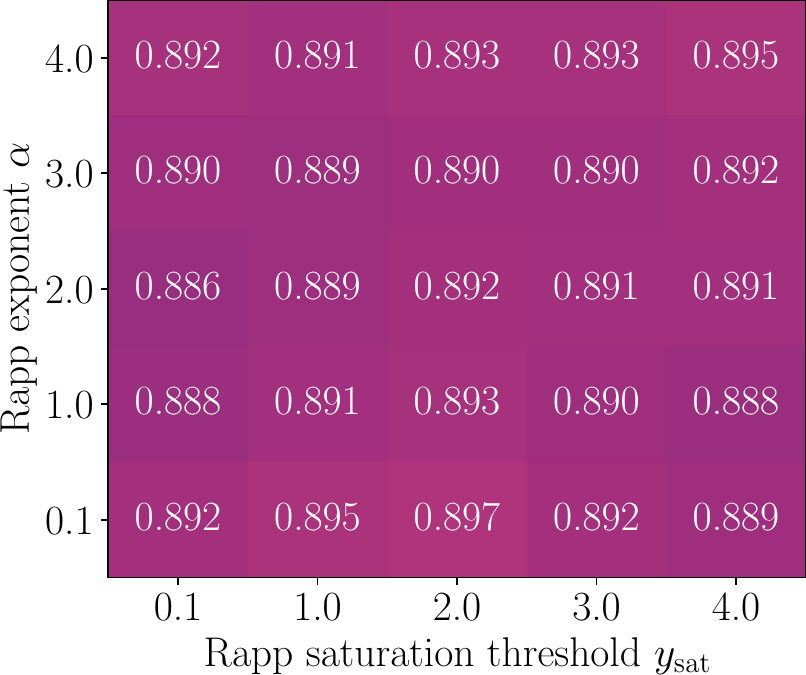}
        \vspace{0.002cm}
        \caption{SECOM}
    \end{subfigure}

    \caption{Sensitivity study of mean accuracy over various data sets for the parameters $\alpha$ and $y_{\rm sat}$ of the utilized Rapp activation function in~\eqref{eq:rapp-activation}. $N_r=1024$ has been used in all data sets except for SECOM, where $N_r$ was set to $512$ to avoid overfitting.}
    \label{fig:rapp-parameters}
\end{figure*}

\section{Numerical Evaluation}\label{sec:results}

\begin{figure*}[t]
    \centering
    \begin{subfigure}[b]{0.4\textwidth}
        \centering
        \includegraphics[width=\linewidth]{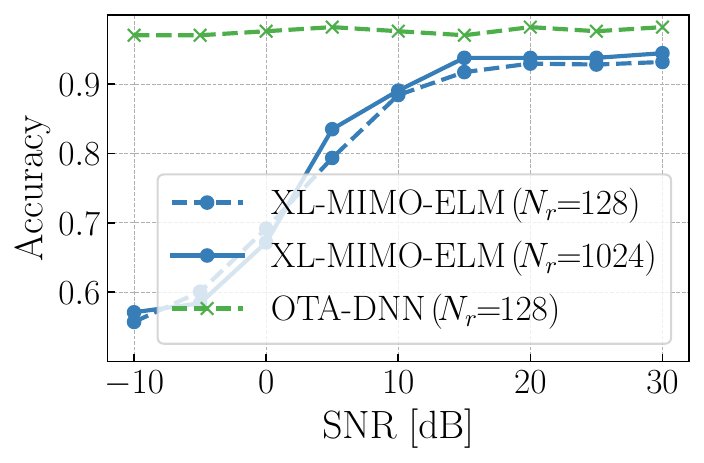}
        \caption{WBCD}
        \label{fig:snr_breast_cancer}
    \end{subfigure}~~~~%
    \begin{subfigure}[b]{0.4\textwidth}
        \centering
        \includegraphics[width=\linewidth]{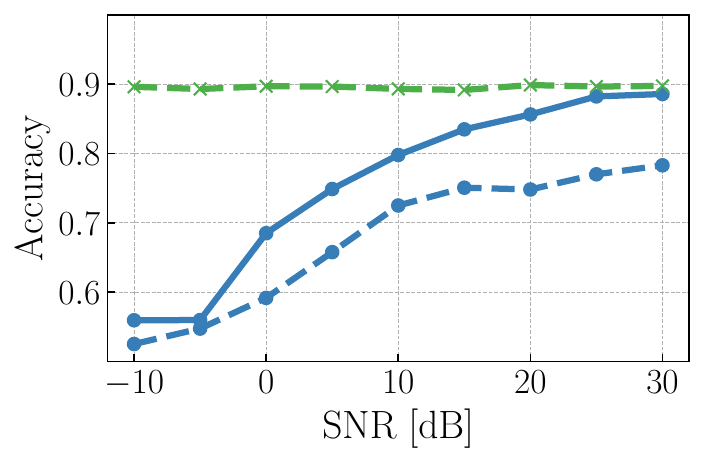}
        \caption{MNIST}
        \label{fig:snr_mnist}
    \end{subfigure}
        \caption{Performance of XL-MIMO-ELM accounting for the effects of \gls{AWGN} over different data sets and numbers of antennas (trainable parameters) compared to noise-free cases.}
        \label{fig:acc-vs-snr}
\end{figure*}


\begin{figure*}[t]
    \centering
    \begin{subfigure}[b]{0.4\textwidth}
        \centering
        \includegraphics[width=\linewidth]{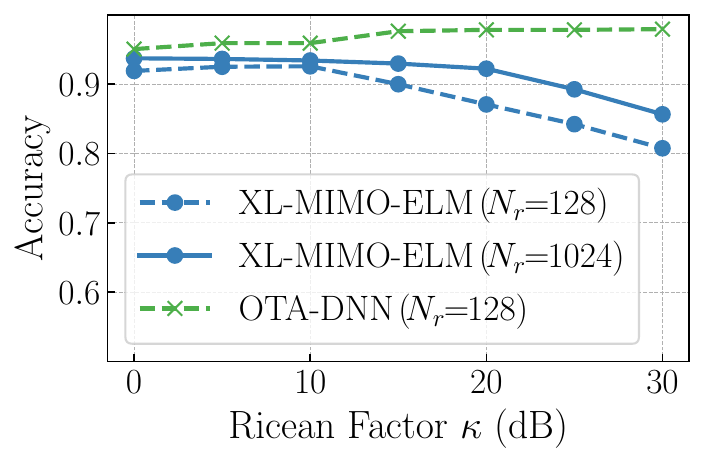}
        \caption{Ricean (WBCD)}
        \label{fig:ricean_breast_cancer}
    \end{subfigure}~~~~%
    \begin{subfigure}[b]{0.4\textwidth}
        \centering
        \includegraphics[width=\linewidth]{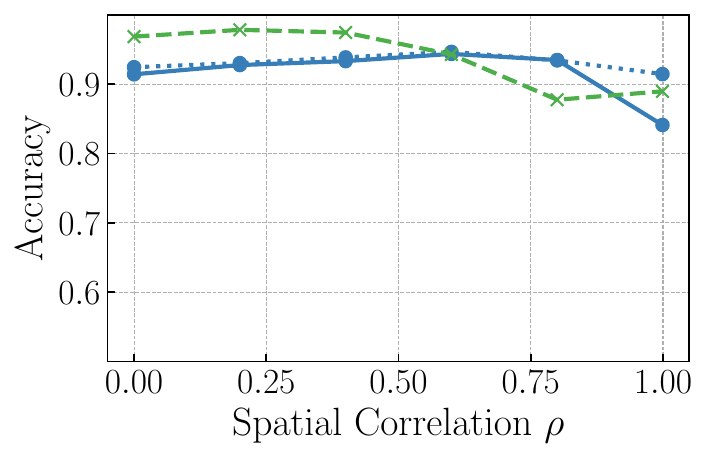}
        \caption{Correlated Rayleigh (WBCD)}
        \label{fig:spatial_correl}
    \end{subfigure}
        \caption{Performance of XL-MIMO-ELM over channel diversity levels on different data sets and numbers of antennas.}
        \label{fig:acc-vs-ricean}
\end{figure*}

\begin{figure*}[!t]
    \centering
    \begin{subfigure}[b]{0.4\textwidth}
        \centering
        \includegraphics[width=\linewidth]{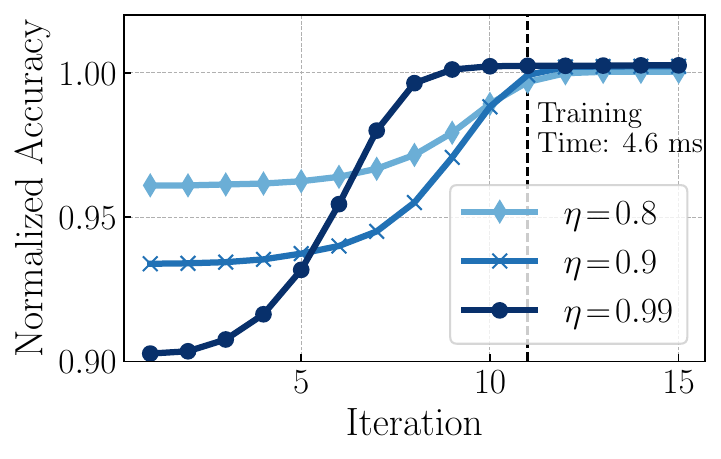}
        \caption{WBCD}
        \label{fig:online_breast_cancer}
    \end{subfigure}~~~~%
    \begin{subfigure}[b]{0.4\textwidth}
        \centering
        \includegraphics[width=\linewidth]{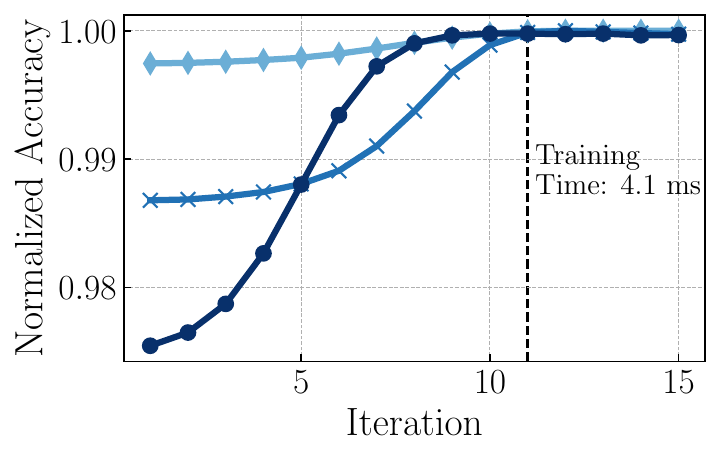}
        \caption{Parkinson's}
        \label{fig:online_parkinsons}
    \end{subfigure}
        \caption{Convergence of iterative re-training in \gls{TV} channels for different levels of the \gls{AR} factor $\eta$, considering $N_r=1024$ hidden nodes, learning rate $\gamma=0.5$, and batch size $|\mathcal{S}|=32$.}
        \label{fig:online-retraining}
\end{figure*}

In this section, we numerically evaluate the performance of \gls{XL}-\gls{MIMO}-\gls{ELM} on small-to-medium binary classification data sets.
Tests were performed on: \textit{i}) the Parkinson's data set~\cite{Parkinsons_dataset}; \textit{ii}) the Wisconsin Breast Cancer Diagnostic (WBCD) data set~\cite{breast_cancer_dataset}; \textit{iii}) a variation of MNIST~\cite{MNIST} where $100$ pixels have been randomly sampled as features and the problem was converted as binary classification by predicting whether each image contains an even or odd digit; and \textit{iv}) the SEmiCOnductor Manufacturing (SECOM) quality data set with $20$ randomly selected features. An $70\!:\!20\!:\!10$ training-test-validation split was applied, and all features have been independently standardized to zero mean and unit variance.
The considered benchmarks include: \textit{i}) a conventional \gls{ELM} algorithm (referred to as ``Digital ELM'') using sigmoid activation and uniformly random hidden layer coefficients in $[0,\!1]$; \textit{ii}) a \gls{MLP} of a hidden layer whose number of hidden neurons $n_h \triangleq N_r / (d+1)$ is chosen so that it has the same number of trainable weights as the \glspl{ELM}; and \textit{iii}) \glspl{SVM} which have $O(d)$ parameters.
All benchmarks constitute cases of typical classification approaches, i.e., they do not rely on \gls{OTA} computation nor on fading channels, therefore, if they were to be applied for \gls{EI}, one would have to account for the computational overhead needed for data transmission.

The mean test set classification accuracy over $300$ random seeds for the considered methods for all data sets is illustrated in Fig.~\ref{fig:all_datasets_wide} over increasing $N_r$ of \gls{RX} antennas (i.e., \gls{ELM}/\gls{MLP} trainable parameters). As expected from the theoretical analysis, as the number of hidden neurons increases, the \glspl{ELM} perform successful inference, while also approaching the performance of the \gls{MLP}, notwithstanding overfitting seen in Fig.~\ref{fig:secom}. It is also shown that, across all data sets, the performance of \gls{XL}-\gls{MIMO}-\gls{ELM} is on par with the digital \gls{ELM} variation under the same number of parameters, and it outperforms \gls{SVM} in most cases. Training results in a latency of a few to $O(100)$ ms for most cases, which typically fits within typical FR1 channel coherence times of low mobility~\cite{kivinen2001_emperical}.

Furthermore, to investigate the performance brought to the proposed \gls{XL}-\gls{MIMO}-\gls{ELM} framework directly by the considered nonlinear activation function, we have performed an ablation study by repeating the experiments of Fig.~\ref{fig:all_datasets_wide} while substituting the Rapp activation of~\eqref{eq:rapp-activation} with the standard Sigmoid and ReLU activations.
These activations are used only as part of this ablation study and are not intended to be realized in hardware, in contrast to the Rapp activation.
The obtained performance is reported in Fig.~\ref{fig::ablation-study}, where it can be observed that, apart from overfitting scenarios in the digital activations, all tested settings achieve comparable performance.
This is an attractive benefit for the proposed hardware-implemented Rapp activation. Its relevant robustness to overfitting is attributed to its parameterized form, since appropriate $\alpha$ and $y_{\rm sat}$ values can be determined through hyper-parameter tuning.
To further demonstrate the effects of hyper-parameter tuning, a sensitivity analysis is summarized in Fig.~\ref{fig:rapp-parameters}, where the combined effects of different values for $\alpha$ and $y_{\rm sat}$ are given for each data set. For the ranges of values investigated therein, performance variations are observed that range between $0.009$ and $0.025$ depending on the dataset, demonstrating relative robustness.

We have conducted further investigations on the effect of wireless parameters on the XL-MIMO-ELM performance. A \gls{GO} benchmark trained to account for varying fading, dubbed ``\gls{OTA}-\gls{DNN},'' has been implemented. In particular, the approach of~\cite{DeepOAC} has been adopted with a $10$-layer \gls{MLP} split across the transceivers. The network received both the data instance and the current channel matrix $\mathbf{H}^{\rm r}(k)$ as inputs. The outputs of its $5$-th hidden layer were transmitted over the channel which were then received as inputs by its $6$-th layer. The network was trained on joint batches of randomly sampled $\mathbf{H}^{\rm r}(k)$ paired with random $\mathbf{x}^{(i)} \in \mathcal{D}$ for $300$ epochs, resulting in excessive training periods. Figure~\ref{fig:acc-vs-snr} shows the effect of the \gls{SNR} level on the classification accuracy. I.i.d. \gls{AWGN} instances were sampled for every computation of~\eqref{eq:received signal} with $\sigma^2$ values so that the receive \gls{SNR} results in pre-defined levels.
It can be observed that, for SNR $\geq 20$ dB, the performance of the proposed scheme under all system parameters (data set and $N_r$) approaches the performances of its idealized case counterpart (Fig.~\ref{fig:all_datasets_wide}), demonstrating the validity of the high-\gls{SNR} analysis conducted.
Next, Fig.~\ref{fig:acc-vs-ricean} examines the effects of channel conditions on XL-MIMO-ELM. As discussed in Prop.~\ref{prop:universal-approximation}, rich scattering conditions provide favorable distributions with enough diversity for accurate classification, while results degrade as the dominance of the \gls{LOS} component increases, as shown in Fig.~\ref{fig:ricean_breast_cancer}. Correlated Rayleigh fading~\cite{6184250} is considered in Fig.~\ref{fig:spatial_correl}, where $\kappa=0$ and $\mathbf{H}_{\rm NLoS} \sim \mathcal{CN}(\mathbf{0}, \mathbf{R}_\rho \otimes \mathbf{R}_\rho)$ with $[\mathbf{R}_\rho]_{i,j} = \rho^{|i-j|}$, where $0 \leq \rho \leq 1$ is the correlation coefficient.
\gls{OTA}-\gls{DNN} presents an upper bound due to its larger model capacity and offline training overhead.
However, its performance is approached by our XL-MIMO-ELM for $\kappa \!\leq\! 10$ dB and $\rho \!< \!0.8$.

To evaluate the re-training procedure of~\eqref{eq:w-online-update} for \gls{TV} channels, the convergence of the weight update policy is given in Fig.~\ref{fig:online-retraining} for various \gls{AR} factors. The accuracy is normalized with respect to the one achievable by \eqref{eq:training}'s \gls{LS} solution over the complete data. It can be inferred that, as the channel moves to its next state, very few updates (each executed within approx. $ 0.4$~ms) are needed to re-tune the XL-\gls{MIMO}-\gls{ELM} to achieve its previous performance, even in low \gls{AR} regimes. It is finally evident that high AR factors enable faster convergence.

\section{Conclusion}
This paper introduced a novel framework for \gls{OTA} training and inference leveraging wireless propagation and nonlinear analog combining in XL MIMO systems. An analog activation function modeling soft thresholding was presented together with a theoretical analysis confirming that the proposed XL-MIMO-ELM implementation preserves the universal approximation capabilities of original \glspl{ELM}. A procedure for fast re-tuning under dynamic fading was also included. Our numerical investigations demonstrated that XL-\gls{MIMO}-\gls{ELM} achieves performance on par with the original \gls{ELM} algorithm and \gls{DNN} benchmarks across various setups.

\bibliographystyle{IEEEtran}
\bibliography{references}

\end{document}